\newtheorem{thm}{Theorem}
\newtheorem{prop}{Proposition}
\newtheorem{remark}[thm]{Remark}
\newcommand{\mR}{{\mathbb R}}
\newcommand{\bx}{{\mathbf x}}
\newcommand{\mE}{{\mathbb E}}
\newcommand{\cM}{{\mathcal M}}
\newcommand{\tr}{\operatorname{trace}}
\newcommand{\trace}{\operatorname{tr}}
\newcommand{\jk}[1]{{\color{magenta}{#1}}}
\definecolor{grey}{rgb}{0.6,0.6,0.6}
\definecolor{lightgray}{rgb}{0.97,.99,0.99}
\begin{document}

\title{State tracking of linear ensembles via optimal mass transport}

%\title{Tracking of densities with linear dynamics from partial information:\\ an optimal mass transport approach}
%State estimation of Linear Ensambles via optimal mass transport
%Tracking of densities:\\ an optimal mass transport approach}

\author{Yongxin Chen and Johan Karlsson
\thanks{*This work was in part supported by the Swedish Research Council (VR) grant 2014-5870 and the Swedish Foundation of Strategic Research (SSF) grant AM13-0049.}% <-this % stops a space
\thanks{Y.\ Chen is with the Department of Electrical and Computer Engineering, Iowa State University, email:yongchen@iastate.edu. J.\ Karlsson is with Department of mathematics, KTH Royal Institute of Technology. email: johan.karlsson@math.kth.se.
%University of Minnesota, Minneapolis, Minnesota MN 55455, USA; {email: \{chen2468,tryphon\}@umn.edu}
%Universit\`a di Padova, via Trieste 63, 35121 Padova, Italy; {email: pavon@math.unipd.it}}}
%\markboth{\today}{}
}}

\maketitle

\begin{abstract}

We consider the problems of tracking an ensemble of indistinguishable agents with linear dynamics based only on output measurements. In this setting, the dynamics of the agents can be modeled by distribution flows in the state space and the measurements correspond to distributions in the output space.
In this paper we formulate the corresponding state estimation problem using optimal mass transport theory with prior linear dynamics, and the optimal solution gives an estimate of the state trajectories of the ensemble.
For general distributions of systems this can be formulated as a convex optimization problem which is computationally feasible with  when the number of state dimensions is low.
In the case where the marginal distributions are Gaussian, the problem is reformulated as a semidefinite programming and can be efficiently solved for tracking systems with a large number of states.

%\rike{crossing a sequence of distributions representing the measurements.} \jk{(XX What do you mean with crossing?  Can we be more specific? Maybe say something like "measurements correspond to distributions on the lower dimensional output space"? XX)} 
%Since the output measurements are not tied to specific agents, there is an association problem. A naive approach would result in a exponential number of possible associations. To get around this, 

%where we use recent developments that are applicable for systems with a prior dynamics.

%In the case where the marginal distributions are Gaussian, the problem is reformulated as a semidefinite programming and can be efficiently solved for large state dimension. {\color{red} For general cases, it is computationally feasible only when the state dimension is low, with general purpose solver, albeit the fact that our problem has a convex formulation.}

\end{abstract}

\section{Introduction}

The optimal mass transport theory provides a geometric framework for mapping a distribution to another one in a way that minimizes the total transport cost \cite{villani2003topics}. 
This has been used in many contexts, traditionally for application in economics and logistics \cite{Gal16}, and more recently in imaging and machine learning \cite{benamou2015iterative, engquist2014application, haker2004optimal,karlsson2016generalized} as well as systems and controls \cite{CheGeoPav14a,CheGeoPav17a, elvander2017interpolation}. 
In case when the transport cost is quadratic, the problem may be formulated as a fluid dynamics problem \cite{benamou2000computational}. 
%resulting in a convex optimization problem in terms of an PDE. % the optimal mass transport problem with a quadratic cost was formulated as a convex optimization problem in terms of an PDE. 
It can also be viewed as an optimal control problem of the density of the particles that obey the dynamics $\dot x(t)=u(t)$ \cite{chen2016relation}. In the subsequent paper \cite{chen2017optimal} a natural generalization of this problem is introduced where the underlying linear dynamics of the particles become $\dot x(t)=Ax(t)+Bu(t)$.

In this work we consider the extension of this framework to the case where the full state information is not available. 
In particular we consider the tracking problems where we seek to estimate the states of several identical and indistinguishable systems from only their joint outputs. 
This is also known as state estimation of ensembles, see \cite{zeng2016ensemble,zeng2017sampled}. 
One of the main obstacles is that it is not known which output that is generated by a certain subsystem, hence an association problem has to be solved. A brute force approach to this would result in a combinatorial problem. However, by formulating this as an optimal transport problem which is convex and where the number of variables only grow linearly in the number of stages.
Our formulation not only allows for tracking a finite number of particles, but also applies to the more general problems of tracking distributions.
We also consider the case when the underlying distributions are Gaussian. In this case the number of variables can be reduced significantly and the problem can be solved efficiently with large number of state dimensions.

At the very high level, we are developing a framework to smoothly interpolate a sequence of probability densities, in a way akin to smoothly interpolate several points in the Euclidean space. Indeed, the cubic spline interpolation of several points has a variational formulation in the flavor of optimal control \cite{Hol57}. The cubic spline counterpart in the space of distributions has been recently studied in \cite{CheConGeo18,BenGalVia18}. Our work can be viewed as a generalization of these where more general underlying linear dynamics, instead of simple second order integrator, are considered. 

The rest of the paper is structured as follows. Section \ref{sec:background} is a brief introduction to the optimal mass transport theory. In Section \ref{sec:track}, we formulate the tracking problems using optimal mass transport theory. The case where the densities are assumed Gaussian is discussed in Section \ref{sec:gaussian}. Finally, we present several numerical examples in Section \ref{sec:example} to illustrate our framework.

\section{Background on optimal mass transport}\label{sec:background}
%\subsection{OMT with linear dynamics \cite{chen2017optimal}}
Monge's original formulation of optimal mass transport is as follows (see, e.g., \cite{villani2003topics}). Consider two nonnegative distributions, $\rho_0$ and $\rho_1$, of the same mass, defined on a compact set $X\subset \mR^n$. The optimal mass transport problem seeks a transport function $f: X\to X$ 
that minimizes the transportation cost
\[
\int_{X}c(x,f(x))\rho_0(x)dx
\]
over all the mass preserving maps from $\rho_0$ to $\rho_1$, namely,  
\begin{equation}\label{eq:masspreserving}
\int_{x\in \mathcal{U}}\rho_{1}(x)dx=\int_{f(x)\in \mathcal{U}}\rho_{0}(x)dx\quad \mbox{ for all } \mathcal{U}  \subset X,
\end{equation}
which is often denoted $f_\# \rho_0=\rho_1$. %We say that $\rho_1$ is the push forward of $\rho_0$ under the function $f$.
%\det(\nabla\phi(x))\mu_1(\phi(x))=\mu_0(x).
The function $c(x_0,x_1):X\times X\to \mR_+$ is a cost function that describes the cost for transporting a unit mass from $x_0$ to $x_1$. The Monge's problem is usually difficult to solve due to the nontrivial %sophisticated 
constraint $f_\# \rho_0=\rho_1$ and the minimum may not exist. To overcome these difficulties, Kantorovich proposed a linear programming  relaxation 
	\begin{equation}
		\min_{\pi \in \Pi(\rho_0,\rho_1)} \int_{X\times X} c(x_0,x_1) d\pi(x_0,x_1)
	\end{equation}
where $\Pi(\rho_0,\rho_1)$ denotes the set of all joint distributions between $\rho_0$ and $\rho_1$. In fact, when $\rho_0$ and $\rho_1$ are absolutely continuous, %for convex cost $c$, 
these two formulations are equivalent.
	
When the cost function is quadratic, i.e., $c(x_0,x_1)=\|x_0-x_1\|^2_2$, the optimal mass transport problem can be set up as an optimal control problems in fluid dynamics \cite{benamou2000computational}
\begin{align*}
\min_{u,\hat\rho} \quad & \quad\int_{0}^1\int_{x\in X} \|u(t,x)\|^2\hat\rho(t,x)dxdt\\
\mbox{subject to } & \quad \frac{\partial\hat\rho}{\partial t}+\nabla \cdot (u\hat\rho)=0\\
&\quad \rho_k=\hat\rho(k,\cdot), \; \mbox{ for } k=0,1. 
\end{align*}
 This can be interpreted as an optimization problem where the mass distributions represented by infinite-decimal particles, each carrying a cost corresponding to the optimal control problem   
\begin{align*}
	\min_{u} \qquad & \int_{0}^{1} \|u(t)\|^2 dt \\
	\mbox{subject to }\quad  & \dot x(t) = u(t), \\
	 &x(0) = x_0 \mbox{ and } x(1)=x_1
\end{align*}
where $x_0, x_1$ are the initial and final position of the particle, respectively. Hence, choosing the quadratic cost in the optimal mass transport problem can be seen as assuming the underlying dynamic being $\dot x(t) = u(t)$. 

In \cite{chen2017optimal} this was generalized through replacing the cost function that reflects deviation from the trajectory of the underlying system dynamics. 
It is associated with the linear dynamic  
%Optimal mass transport over linear dynamical system \cite{chen2017optimal}.
%Main idea: .
%Underlying dynamics is given by
\begin{eqnarray}
	\dot x(t)&=&A x(t) +Bu(t)
\end{eqnarray}
and optimal control problem
\begin{align*}
	\min_{u} \qquad & \int_{0}^{1} \|u(t)\|^2 dt \\
	\mbox{subject to }\quad  & \dot x(t) = Ax(t)+Bu(t), \\
	 &x(0) = x_0 \mbox{ and } x(1)=x_1.
\end{align*}
The cost is then given by 
\begin{equation}\label{eq:cost_lineardyn}
	c(x_0,x_1) =(x_1-\Phi  x_0)^T Q(x_1-\Phi x_0).
\end{equation}
where $\Phi=e^{A}, Q = M_{10}^{-1}$, and
\[
M_{10}=\int_0^1 e^{A(1-\tau)}BB^T e^{A^T(1-\tau)}d\tau
\]
is the controllability Grammian. % and $\Phi_{ts}=\exp(A(t-s))$ is the state transition matrix.	
Apparently, it reduces to the standard cost $c(x_0, x_1)=\|x_0-x_1\|^2$ when $A=0, B=1$.

\section{Tracking with optimal mass transport from output measurements}\label{sec:track}
We next extend this connection between optimal transport and optimal control to systems with output measurements.
%and where information from several time points are available.
To this end, assume that the underlying dynamic and output measurements corresponds to the linear system
	\begin{subequations}\label{eq:linearsys}
	\begin{eqnarray}
	\dot x(t)&=&A x(t) +Bu(t)\label{eq:linearsysa}\\
	y(t) &=& C x(t)\label{eq:linearsysb}
	\end{eqnarray}
	\end{subequations}
where $A\in \mR^{n\times n}, B\in \mR^{n\times p}, C\in \mR^{m\times n}$ and $(A,B)$ is a controllable pair.
We seek to track the time varying distribution $\hat \rho_t$, where each particle abides by \eqref{eq:linearsysa}, based on the output distributions $\rho_k=C_\#\hat \rho_k$ at the times $k=0, 1, \ldots, T$. 
Note that only the distribution of the output is available. We don't have access to the information about each particle, namely, the particles are indistinguishable.
For example, with a finite ensemble $(x_1(t), \ldots, x_N(t))$ then $\hat \rho_t=\sum_{i=1}^N \delta(x-x_i(t))\in \cM_+(\mR^n)$  for $t\in [0,T]$ and the outputs are $\rho_k=\sum_{i=1}^N \delta(y-Cx_i(k))\in \cM_+(\mR^m)$ for $k=0,1,\ldots, T$, where $\delta$ is the Dirac delta function and $\cM_+(\mR^n)$ is the set of measures on $\mR^n$.
 For determining identifiability of this problem, see \cite{zeng2016ensemble}. 

We propose to model this tracking problem as the following optimal mass transport problem.
We seek a flow of nonnegative measures $\hat\rho: t\to \cM_+(X)$ that minimize
\begin{subequations}\label{eq:tracking}
\begin{align}
\min_{u,\hat\rho} \quad & \quad\int_{t=0}^T\int_{x\in X} \|u(t,x)\|^2\hat\rho(t,x)dxdt\\
\mbox{subject to } & \quad \frac{\partial\hat\rho}{\partial t}+\nabla \cdot ((Ax + Bu)\hat\rho)=0\\
&\quad \rho_k=C_\#\hat\rho(k,\cdot), \; \mbox{ for } k=0,1,\ldots, T. 
\end{align}
\end{subequations}
Reformulating this using the Kantorovich formulation of the optimal transport problems we arrive at the linear programming problem
\begin{subequations}\label{eq:optimization}
\begin{align}
\min_{\substack{\hat\rho_k \in \mathcal{M}_+(X)\\  \pi_k\in \mathcal{M}_+(X^2)}} \quad & \;\sum_{k=0}^{T-1}\int_{(x_k,x_{k+1})\in X^2}\!\! c(x_k,x_{k+1})d\pi_k(x_k,x_{k+1})\label{eq:optimizationa}\\
\mbox{subject to } \quad & \; \int_{x_{k+1}\in X} d\pi_k(x_k,x_{k+1})=d\hat\rho_k(x_{k})\label{eq:optimizationb} \\
& \; \int_{x_{k}\in X} d\pi_k(x_k,x_{k+1})=d\hat\rho_{k+1}(x_{k+1})\label{eq:optimizationc}\\
&\; \rho_k=C_\#\hat\rho_k, \; \mbox{ for } k=0,1,\ldots, T,\label{eq:optimizationd}
\end{align}
\end{subequations}
where the cost function $c$ is given by \eqref{eq:cost_lineardyn}.
This optimization problem has a dual formulation in terms of continuous test functions on $C(X)$, denoted ${\rm Cont}(C(X))$.% denotes the set of continuous functions on $X$.

\begin{thm}\label{thm:duality}
The optimization problem \eqref{eq:optimization} is the dual of 
\begin{subequations}\label{eq:dual} 
\begin{align}
\max_{\substack{\phi_k\in {\rm Cont}(C(X))\\k=0,1,\ldots, T}}  \;& \sum_{k=0}^T \int_{C(X)}\phi_{k}(y_k)d\rho_{k}(y_k)\label{eq:duala}\\
\mbox{ subject to}  \;\;  & \sum_{k=0}^T \phi_{k}(Cx_k)\le \sum_{k=1}^T c(x_{k-1},x_k)\label{eq:dualb} \\
 & \mbox{for all } x_k\in \mR^n, \; k =0,\ldots, T,\nonumber
\end{align}
\end{subequations}
and the duality gap is zero.

\end{thm}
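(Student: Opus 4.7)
\emph{Proof plan.} My plan is to realize \eqref{eq:dual} as the Lagrangian dual of \eqref{eq:optimization} after eliminating auxiliary multipliers, and then to invoke Kantorovich-type strong duality.

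First, I would attach continuous multipliers $\alpha_k,\beta_k$ on $X$ to the marginal constraints \eqref{eq:optimizationb}--\eqref{eq:optimizationc} of $\pi_k$, and $\phi_k \in {\rm Cont}(C(X))$ to the observation constraint \eqref{eq:optimizationd}. Collecting terms in the Lagrangian, the coefficient of $d\pi_k$ is $c(x_k,x_{k+1}) - \alpha_k(x_k) - \beta_k(x_{k+1})$, while the coefficient of $d\hat\rho_k$ is (with boundary convention $\beta_{-1} = \alpha_T = 0$)
\[
\alpha_k(x) + \beta_{k-1}(x) - \phi_k(Cx).
\]
Infimizing over $\pi_k,\hat\rho_k \in \cM_+$ forces each coefficient to be pointwise nonnegative and leaves the dual objective $\sum_{k=0}^T \int \phi_k\, d\rho_k$.

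Second, I would eliminate the $\alpha,\beta$ multipliers. Summing the nonnegativity conditions along any trajectory $x_0,\ldots,x_T$ and using $\alpha_k(x_k) + \beta_k(x_{k+1}) \le c(x_k,x_{k+1})$ telescopes exactly into \eqref{eq:dualb}, so \eqref{eq:dual} is a relaxation of the full Lagrangian dual. For the converse---given $\phi_k$ satisfying \eqref{eq:dualb}, build admissible $\alpha_k,\beta_k$---I would use a Bellman-type forward recursion: set $\alpha_0 = \phi_0\circ C$, then
\[
\beta_k(x_{k+1}) = \inf_{x_k}[c(x_k,x_{k+1}) - \alpha_k(x_k)], \qquad \alpha_{k+1} = \phi_{k+1}\circ C - \beta_k.
\]
The recursion immediately enforces $\alpha_k + \beta_k \le c$ and equality in every middle $\hat\rho$-constraint; unrolling it yields
\[
\beta_{T-1}(x_T) = \inf_{x_0,\ldots,x_{T-1}}\Big[\sum_{j=1}^T c(x_{j-1},x_j) - \sum_{j=0}^{T-1} \phi_j(Cx_j)\Big],
\]
which is $\ge \phi_T(Cx_T)$ precisely by \eqref{eq:dualb}, securing the last boundary constraint.

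The step I expect to be the main technical obstacle is the zero-duality-gap claim. Since \eqref{eq:optimization} is an infinite-dimensional LP with equality constraints in the weak-$*$ sense, one must verify that the primal value function is lower semicontinuous in the prescribed marginals $\rho_k$, after bounding the total masses a priori by those of the $\rho_k$. This is standard when $c$ is continuous on the compact $X$ and the primal is feasible, and follows by a Fenchel--Rockafellar argument as in \cite{villani2003topics}, extended routinely to the multi-marginal setting with the additional push-forward constraints $C_\#\hat\rho_k = \rho_k$.
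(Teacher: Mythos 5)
Your argument is correct, but it runs the duality in the opposite direction from the paper and rests on a different key device. The paper starts from \eqref{eq:dual}, relaxes the single constraint \eqref{eq:dualb} with one multiplier measure $M\in\cM_+(X^{T+1})$ on the whole path space, identifies $\hat\rho_k$ and $\pi_k$ as the one- and two-dimensional marginals of $M$, and then needs a gluing step --- a Markov-chain construction --- to show that every family of couplings satisfying \eqref{eq:optimizationb}--\eqref{eq:optimizationc} actually arises as the marginals of some $M$; that gluing lemma is the paper's main technical ingredient. You instead dualize \eqref{eq:optimization} directly with pairwise multipliers $\alpha_k,\beta_k$ and the $\phi_k$, and your main ingredient is the Bellman/$c$-transform recursion eliminating $\alpha_k,\beta_k$: the telescoping direction and the unrolled infimum are both right, and the constructed potentials stay continuous because $c$ is continuous on the compact $X\times X$, so the projected feasible set is exactly \eqref{eq:dualb}. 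What each route buys: yours avoids the gluing lemma and exploits the chain structure of the cost explicitly (the standard mechanism for tree-structured multimarginal transport), while the paper's avoids the recursion and produces the path-space measure $M$ for free, which is also what supports the stochastic-process interpretation of the solution. Two caveats. First, the theorem literally asserts that \eqref{eq:optimization} is the dual of \eqref{eq:dual}, which is the direction the paper computes; you compute the dual of \eqref{eq:optimization}, and although the two coincide here once the gap is closed, you should state that you are invoking this symmetry. Second, your zero-gap step remains a pointer to Fenchel--Rockafellar plus lower semicontinuity of the value function; that is acceptable, and in fairness no less complete than the paper's one-line appeal to the existence of a strictly feasible point of \eqref{eq:dual} (e.g.\ $\phi_k\equiv-\varepsilon$, using $c\ge 0$), but it is the one place where neither proof is fully spelled out.
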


\begin{proof}
See the appendix for a proof. 
\end{proof}

The proof is derived by noting that \eqref{eq:optimization} is a multimarginal optimal mass transport problem \cite{pass2011uniqueness} with cost being a function that separates into $T$ terms and each term only depends on two of the dimensions. 

Both the formulations \eqref{eq:optimization} and \eqref{eq:dual}
are linear programming problems that can be discretized and solved using general purpose solvers if the number of states are small. 
The number of variables grows linearly in terms of $T$ and  exponentially in the number of system dimensions, thus they suffer from the curse of dimensionality.
%Solving the above optimization problems using a naive ``brute force'' approach includes solving the problem on a grid of the state space $X$, which is infeasible for large $n$, \rike{regardless of the fact that the number of variables is linear in $T$.} 
An alternative approach would be testing all associations between measurements and particles, but this would require a combinatorial search which grows exponentially in $T$. Yet another possible approach for addressing this problem is to use heuristics along the lines of K-means clustering, as proposed in \cite{zeng2017sampled}.  

When the number of states is large we might therefore want to restrict the distributions to certain classes. One  such class of particular interest is the Gaussian distributions.  

\section{Gaussian cases}\label{sec:gaussian}	
In this section, we zoom in to the case when all the marginal distributions are Gaussian. We assume that, for all $k=0,1,\ldots, T$, the $k$-th marginal $\rho_k$ of the measurement is a Gaussian distribution with mean $\mu_k$ and covariance $\Sigma_k$. By linearity, the output density tracking problem can be divided into two parts: interpolating the means $\{\mu_k\}$ and interpolating the covariances $\{\Sigma_k\}$. 

Interpolating the means is equivalent to solving \eqref{eq:tracking} for a single particle. It reduces to the optimal control problem
	\begin{subequations}\label{eq:splinegen}
	\begin{align}
	\min_{u} \qquad & \int_{0}^{T} \|u(t)\|^2 dt \\
	\mbox{subject to }\quad  & \dot x(t) = Ax(t)+Bu(t), &&0\le t \le T\\
	 &Cx(k) = \mu_k,  &&k=0,\ldots, T.
	\end{align}
	\end{subequations}
By introducing a Lagrangian multiplier $\lambda(\cdot)$, it is easy to see the optimal control is of the form $u(t) = B^T\lambda(t)$ with $\lambda$ satisfying the dual dynamics
	$
		\dot\lambda(t) = -A^T \lambda(t)
	$
for each interval $t \in (k, k+1)$. For each interval, if we fix $x(k), x(k+1)$, then we can obtain a closed-form expression for the optimal cost, which is
	\[
		(x(k+1)-\Phi x(k))^T Q(x(k+1)-\Phi x(k)).
	\] 
Therefore, a strategy to solve \eqref{eq:splinegen} is first minimizing $u$ over fixed $x(0),\ldots,x(T)$ and then minimizing the result over $x(0),\ldots,x(T)$. 
%In the end, one needs to solve a linear equations system of dimension $(T+1)n$. 
%Interpolating the means is a simple generalization of cubic spline curve problem with given points $\mu_0,\ldots,\mu_T$. 	
	
%\begin{remark}
%Note that the classic cubic spline curve, given by the solution of
%	\begin{eqnarray*}
%	\min_{y\in C_2} && \int_{0}^{T} \|\dot y(t)\|^2 dt \\
%	&& y(k) = \mu_k, ~k=0,1,\ldots, T,
%	\end{eqnarray*}
%is the special case of \eqref{eq:splinegen} when the dynamics is a double integrator, namely, $A=[0\,\,1\,;\,0\,\, 0], B=[1\,;\,0]$. \jk{$y\in C_2$? Twice continuously diff? is this really needed?}
%\end{remark}

The covariances part is solved using semidefinite programming (SDP). We first minimize the cost with fixed state $x(0), x(1),\ldots,x(T)$ and then minimize the resulting cost over $x(k), k=0,1,\ldots, T$ subject to the constraint that $Cx(k)$ is zero mean Gaussian distribution with covariance $\Sigma_k$. For fixed $x(k), k=0,1,\ldots, T$, the minimum of the cost is given in the quadratic form
	\begin{align*}
	&\sum_{k=0}^{T-1}c(x(k),x(k+1)) \\&=\sum_{k=0}^{T-1}(x(k+1)-\Phi x(k))^T Q(x(k+1)-\Phi x(k)).
	\end{align*}
We then minimize this cost subject to the distribution constraint of the output, which reads as
	\begin{subequations}\label{eq:gaussiancost}
	\begin{eqnarray}
		\min &&\mathbb{E}\{ \sum_{k=0}^{T-1} c(x(k),x(k+1))\}\\
		&&y(k)=Cx(k) \sim \Sigma_k, ~k=0,1,\ldots, T.
	\end{eqnarray}
	\end{subequations}

We notice that in the state space, the problem can be viewed as $T$ separate optimal transport problems. However, these problems are coupled through the constraints on the output. Since the cost function is quadratic, it is not difficult to show that the solution remains Gaussian. Thus, the cost becomes 
	\[
		\sum_{k=0}^{T-1} {\rm Tr}(Q\hat\Sigma_{k+1}+\Phi^TQ\Phi\hat\Sigma_{k}-2Q\Phi S_{k,k+1}),
	\]
where $S_{k,k+1}=\mE\{x(k) x(k+1)^T\}$. 
\begin{thm}
The density tracking Problem \eqref{eq:tracking} for Gaussian marginals with covariances $\{\Sigma_0, \Sigma_1,\ldots, \Sigma_T\}$ has the SDP formulation
	\begin{subequations}\label{eq:SDPa}
	\begin{eqnarray}\label{eq:SDPa1}
		\hspace{-0.2cm}\min_{\hat\Sigma,S}\hspace{-0.5cm} &&\sum_{k=0}^{T-1} {\rm Tr}(Q\hat\Sigma_{k+1}+\Phi^TQ\Phi\hat\Sigma_{k}-2Q\Phi S_{k,k+1})\\\label{eq:SDPa2}
		&& \left[\begin{matrix}
		\hat \Sigma_k & S_{k,k+1} \\ S_{k,k+1}^T & \hat \Sigma_{k+1} 
		\end{matrix}\right] \ge 0,~~k=0,\ldots, T-1\\
		&& C\hat\Sigma_{k}C^T=\Sigma_k,~~k=0,1,\ldots, T.\label{eq:SDPa3}
	\end{eqnarray}
	\end{subequations}
\end{thm}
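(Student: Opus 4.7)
The plan is to reduce \eqref{eq:tracking} in the zero-mean Gaussian case to problem \eqref{eq:gaussiancost} (the means are handled separately via \eqref{eq:splinegen}) and then rewrite \eqref{eq:gaussiancost} as an SDP in the second-moment variables $\hat\Sigma_k = \mE\{x(k)x(k)^T\}$ and $S_{k,k+1} = \mE\{x(k)x(k+1)^T\}$.

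First I would argue that restricting to zero-mean Gaussian joint laws is without loss of optimality. Since the cost \eqref{eq:cost_lineardyn} is quadratic in $(x(k),x(k+1))$, replacing any feasible process by the zero-mean Gaussian process with identical pairwise second moments leaves the objective unchanged and preserves the output constraint $Cx(k)\sim\cN(0,\Sigma_k)$. Hence the infimum is attained on Gaussians and is determined entirely by the pairwise moments $\{\hat\Sigma_k,S_{k,k+1}\}$. Next, using linearity of expectation and the cyclic property of the trace, I would expand
\[
\mE\{(x(k+1)-\Phi x(k))^T Q(x(k+1)-\Phi x(k))\} = \tr(Q\hat\Sigma_{k+1}) - 2\tr(Q\Phi S_{k,k+1}) + \tr(\Phi^T Q\Phi\,\hat\Sigma_k),
\]
which, summed over $k$, yields \eqref{eq:SDPa1}. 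The output constraint $Cx(k)\sim\cN(0,\Sigma_k)$ corresponds directly to \eqref{eq:SDPa3}, since $\mathrm{Cov}(Cx(k))=C\hat\Sigma_k C^T$.

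It then remains to identify the feasibility condition on $(\hat\Sigma_k,S_{k,k+1})$ with the PSD constraint \eqref{eq:SDPa2}. On the one hand, if $(x(k),x(k+1))$ is a genuine random vector with the claimed second moments, its covariance is exactly the block matrix in \eqref{eq:SDPa2}, which is therefore PSD. Conversely, given any $(\hat\Sigma_k,S_{k,k+1})$ satisfying \eqref{eq:SDPa2}, I would construct a consistent Gaussian process sequentially by setting $x(k+1)$ to be Gaussian conditional on $x(k)$ with conditional mean $S_{k,k+1}^T \hat\Sigma_k^{\dagger} x(k)$ and conditional covariance $\hat\Sigma_{k+1} - S_{k,k+1}^T \hat\Sigma_k^{\dagger} S_{k,k+1}$; the Schur-complement criterion applied to \eqref{eq:SDPa2} guarantees this conditional covariance is PSD, so the construction is well posed. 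Consistency of the marginal $\hat\Sigma_k$ between the adjacent pairs $(k-1,k)$ and $(k,k+1)$ is automatic because the same variable appears in both blocks.

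The main obstacle is this last step: one has to handle possibly singular $\hat\Sigma_k$ carefully (using pseudoinverses, together with the range condition $\cR(S_{k,k+1})\subset \cR(\hat\Sigma_k)$ implied by PSD of the block) to confirm that \eqref{eq:SDPa2} exactly parametrizes realizable pairwise moments, and to verify that the Markov-type construction yields a bona fide Gaussian process simultaneously satisfying all output constraints \eqref{eq:SDPa3}. Once this equivalence is in place, the reformulation of \eqref{eq:gaussiancost} as the SDP \eqref{eq:SDPa} is immediate.
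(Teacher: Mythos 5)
Your proposal is correct and follows essentially the same route as the paper: reduce to \eqref{eq:gaussiancost}, note that the quadratic cost depends only on pairwise second moments so Gaussians are without loss of optimality, expand the expected cost into the trace form \eqref{eq:SDPa1}, and observe that the block constraints \eqref{eq:SDPa2} exactly characterize realizable pairwise moments. The only difference is that you spell out the sequential Markov-chain (conditional Gaussian) construction with pseudoinverses, whereas the paper delegates this realizability step to a remark citing a constructive graphical-models argument in \cite{CheConGeo18}.
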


We remark that it suffice to have constraint \eqref{eq:SDPa2} to guarantee a well-defined covariance matrix for the random vector $(x_0, x_1, \ldots,x_T)^T$. This can be proven constructively using graphical models, see \cite{CheConGeo18}.

Alternatively, to solve \eqref{eq:gaussiancost}, one can choose to minimize the cost over fixed $y(k), k=0,1,\ldots,T$ first and then minimize result over the distribution of $y(k)$. Let $c_y(y(0),y(1),\cdots,y(T))$ be the minimum of the cost for fixed outputs. More precisely,
	\begin{eqnarray*}
	c_y(y(0),\cdots,y(T))&=&\min_x \sum_{k=0}^{T-1} c(x(k),x(k+1))\\
	&&y(k)=Cx(k),~k=0,1,\ldots, T.
	\end{eqnarray*}
Since $c(\cdot,\cdot)$ is quadratic, $c_y$ is a quadratic function of $y=[y(0),y(1),\cdots,y(T)]^T$. Furthermore, it has the form
	\[
		c_y(y) = y^T R y
	\]
for some positive semidefinite matrix $R$. Thus, problem \eqref{eq:gaussiancost} becomes
	\begin{subequations}
	\begin{eqnarray}
		&& \min_y \mE\{\trace (R yy^T)\}
		\\
		&&y(k) ~\sim~ \Sigma_k, ~k=0,1,\ldots, T,
	\end{eqnarray}
	\end{subequations}
or equivalently,
	\begin{subequations}\label{eq:SDPb}
	\begin{eqnarray}
		&& \min_{\Sigma_y\ge 0} \trace (R \Sigma_y)
		\\
		&&\Sigma_y(k,k) = \Sigma_k, ~k=0,1,\ldots, T.
	\end{eqnarray}
	\end{subequations}

\begin{remark}
The two SDP formulations \eqref{eq:SDPa} and \eqref{eq:SDPb} represent two different ways of viewing the density tracking problem. In \eqref{eq:SDPa} we lift the marginals distributions to the state space and solve $T$ separate optimal transport problems. In contrast, \eqref{eq:SDPb} approaches the problem directly by considering the joint distributions of $y(0),y(1),\cdots,y(T)$. In terms of complexity, \eqref{eq:SDPa} grows linearly as the number of time points $T$, while \eqref{eq:SDPb} grows as $T^6$ in the worse case. Therefore, \eqref{eq:SDPa} is better for computational purpose.
\end{remark}

After obtaining the marginal covariances $\{\hat\Sigma_k\}$ for the state variables, we can recover  $\hat\Sigma(t), k\le t\le k+1$ from $\hat\Sigma_k, \hat\Sigma_{k+1}$ in closed-form for each $k=0,\ldots, T-1$, using optimal mass transport theory over linear dynamics \cite{chen2017optimal}.  It follows that the trajectory of the covariances of the output is given by $\Sigma(t)=C\hat\Sigma(t)C^T$. Finally, let $\mu(t)=x(t)$ be the solution to \eqref{eq:splinegen}, we obtain our optimal density flow being a flow of Gaussian distributions with mean $\mu(t)$ and covariance $\Sigma(t)$.

%After obtaining the covariances 	$\{\hat\Sigma_k\}$ for the state, we can construct the flow as $\Sigma(t)=C\hat\Sigma(t)C^T$ with  \cite{chen2017optimal}
%	\begin{align*}
%&		\hat\Sigma(t)=		M(t-k,0)\Phi(0,t-k)^T\hat\Sigma_k^{-1/2}\big[-\hat\Sigma_k^{1/2}\Phi_{10}^T 
%		M_{10}^{-1}\Phi_{10}\hat\Sigma_k^{1/2}
%		\\
%&		+(\hat\Sigma_k^{1/2}\Phi_{10}^T M_{10}^{-1}\hat\Sigma_{k+1}M_{10}^{-1}\Phi_{10}
%		\hat\Sigma_k^{1/2})^{1/2}
%		\\
%&		+\hat\Sigma_k^{1/2}\Phi(t-k,0)^T
%        M(t-k,0)^{-1}\Phi(t-k,0)\hat\Sigma_k^{1/2}\big]^2 \hat\Sigma_k^{-1/2}\Phi(0,t-k)M(t-k,0)
%	\end{align*}  
%for $k\le t\le k+1$ any $k=0,1,\ldots, T-1$.

\section{Examples} \label{sec:example}
Two examples are provided to illustrate our framework. In the first example, we explain the tracking of finite number of particles and in the second one, we consider a Gaussian distributions tracking problem.
The implementations are made in Matlab using the package CVX, which is a package for specifying and solving convex programs \cite{grant2008graph, cvx}.

\begin{figure}%[t!]
\includegraphics[width=0.48\textwidth]{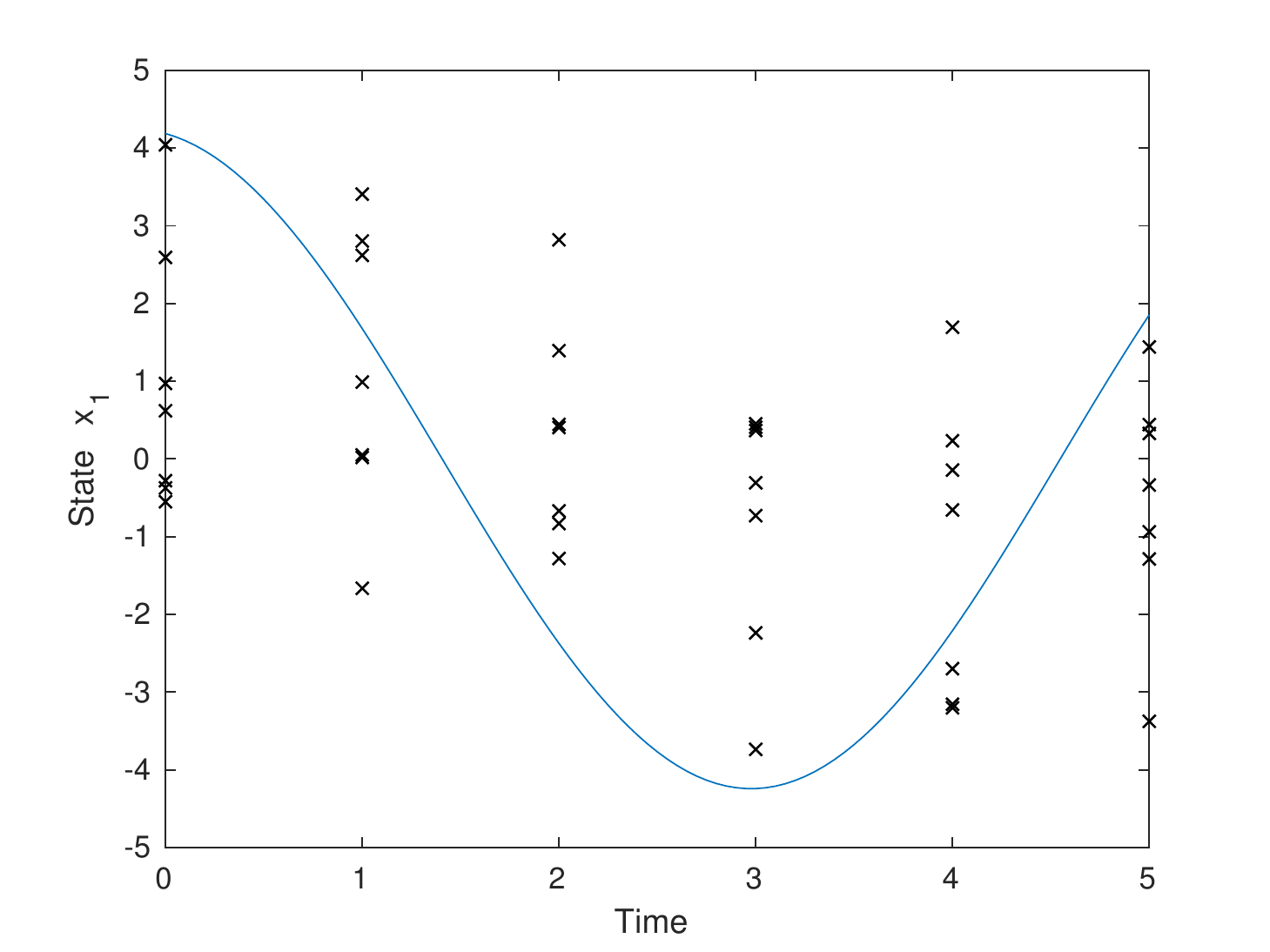}

\caption{Ensemble outputs in example with $N=7$ systems. System dynamic given by \eqref{eq:example_dynamic} and $\sigma=0.1$. Blue trajectory: example of a system trajectory with $\sigma=0$.}
\label{fig:tracking_0}

\end{figure}

\begin{figure}%[t!]
        \centering
\includegraphics[width=0.48\textwidth]{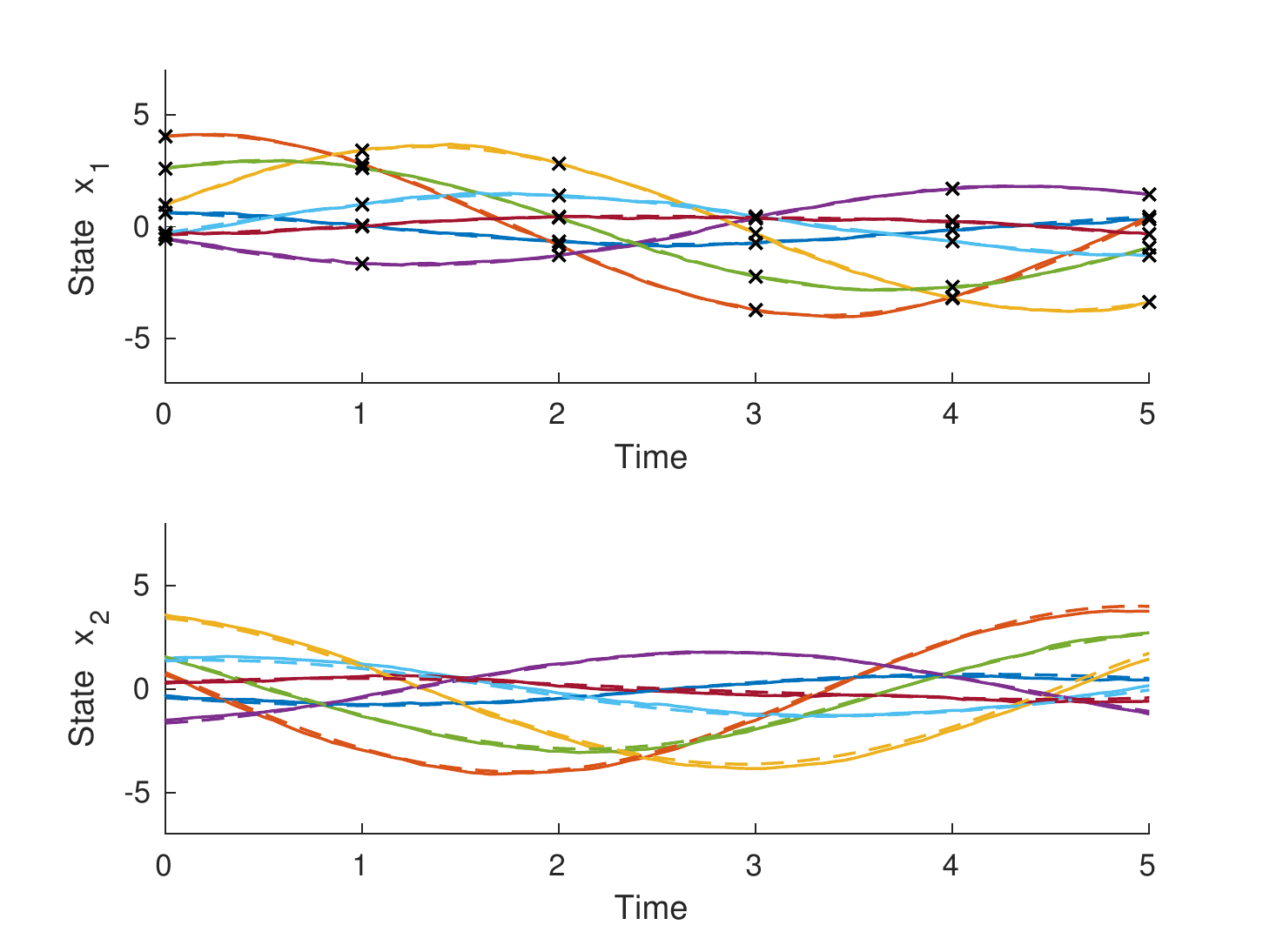}
        \caption{Example with $N=7$ systems to be tracked. System dynamic: \eqref{eq:example_dynamic}. Noise level: $\sigma=0.1$. Available measurement points (x). True system states (solid). Estimated system stares (dashed) Upper figure: State $x_1$.  Lower figure: State $x_2$.}
\label{fig:tracking_1}
\end{figure}

\begin{figure}%[t!]
    \centering
	\includegraphics[width=0.48\textwidth]{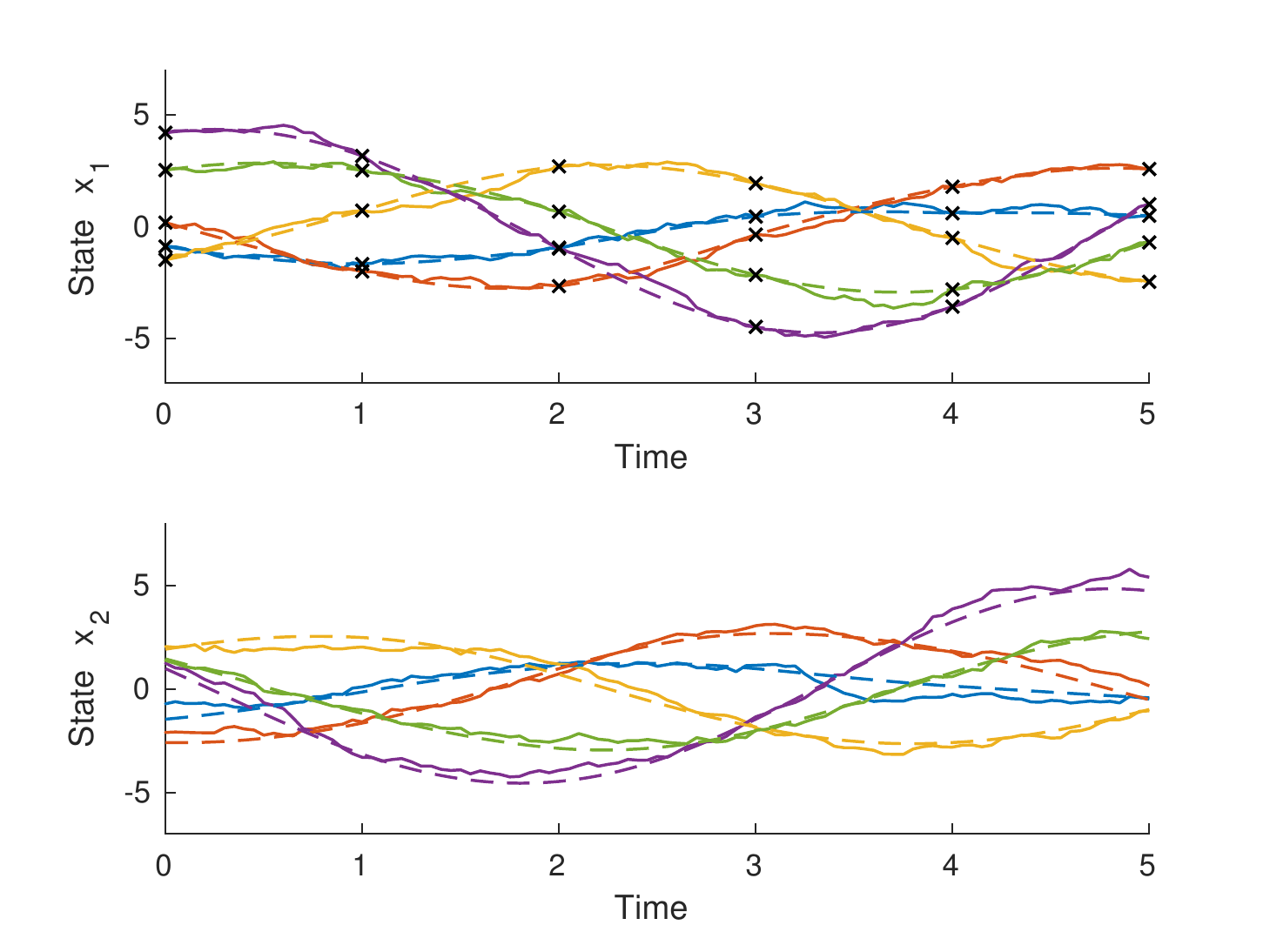}
        \caption{Example with $N=5$ systems to be tracked. Noise level: $\sigma=0.5$. Available measurement points (x). True system states (solid). Estimated system stares (dashed) Upper figure: State $x_1$.  Lower figure: State $x_2$.}
\label{fig:tracking_2} 
\end{figure}

\subsection{Tracking of particles}
We illustrate the tracking of a series of systems with a given system dynamic.
Consider the tracking of $N$ systems with oscillatory dynamics
\begin{align*}
dx(t)&=A x(t) dt + \sigma dw(t)\\
y(t)&=C x(t)
\end{align*}
where the state dynamics is given by
\begin{equation}\label{eq:example_dynamic}
A=\begin{bmatrix}
0&1\\-1&0
\end{bmatrix},
\quad C=\begin{bmatrix}
1&  0
\end{bmatrix}
\end{equation}
and where $dw$ is normalized white Gaussian noise.
We seek to recover the full state information of the systems based on only the unordered outputs, observed at the time point $t=0, 1,\ldots, 5$. 
The problem is solved in the two following examples using the formulation \eqref{eq:optimization}, where the state space is discretized. The discretization of the state space at time $k$ is given by 
\[\{(x_1, x_2)| x_1\in {\rm supp}(\rho_k), x_2\in {\rm linspace} ([-7,7], 150)\},
\]
where ${\rm linspace} ([a,b], N)$ denotes the set of $N$ uniformy spaced points in the interval $[a,b]$.

For the first example we let the number of particles be $N=7$ and the noise level $\sigma=0.1$. Figure \ref{fig:tracking_0} shows the output measurement and an example of a noiseless trajectory corresponding to \eqref{eq:example_dynamic}. We seek to group the measurements corresponding to which system they belong.  
Figure~\ref{fig:tracking_1} shows the reconstruction based on the optimization problem \eqref{eq:optimization}, and both the state estimates correspond well to the true states. Even though there are a few error in the associations, these are between points that are spaced closely together and does not seem to impact the state estimation significantly. 
Next, we consider an example with $N=5$ particles and noise level $\sigma=0.5$. 
% as well as {\rike meaning? associate which of the output measurements that correspond a given system.} 
%{\rike Figure~\ref{fig:tracking_0} depicts several typical instances of measurements $y = x_1$ of the system with no noise. 
%Figure~\ref{fig:tracking_1} shows the tracking results of $7$ particles with low noise level: $\sigma=0.1$. It gives almost perfect reconstruction.  Let the noise level be $\sigma =1$.
Figure~\ref{fig:tracking_2} shows the reconstruction based on the optimization problem \eqref{eq:optimization}. Even though the noise level is fairly large, we are able to achieve good reconstructions of the states.
%Another possible approach for addressing this problem us to use heuristics, such as method of K-means \cite{zeng2017sampled}
%}
%\rike{fig 2 and 3? the time axis should be $[0,5]$ instead of $[1,6]$}

\subsection{Tracking of Gaussian distributions}
We consider a dynamical system consisting of a simple, possibly high dimensional, first order integrator. The dynamics are governed by
\[
A=\begin{bmatrix}
0&I\\0&0
\end{bmatrix},\; 
B=\begin{bmatrix}
0\\I
\end{bmatrix}, \;
C=\begin{bmatrix}
I&0
\end{bmatrix},
\]
where $I$ is an identity matrix of proper size $n$. When the dimension of the output is $n=1$, we randomly generate several covariances and interpolate them using \eqref{eq:SDPa}. The results are depicted in Figure \ref{fig:oneD} for $T=5,10$. It can be observed that the interpolated curves are smooth. Similar results hold for high dimensional setting. Figure \ref{fig:marginals} depicts several Gaussian distributions with different means and covariances we want to track. The tracking result is shown in Figure \ref{fig:twoN4}, which is a natural and smooth interpolation of the observations.

\begin{figure}[t!]
    \begin{subfigure}[t]{0.26\textwidth}
\centering
\includegraphics[width=\textwidth]{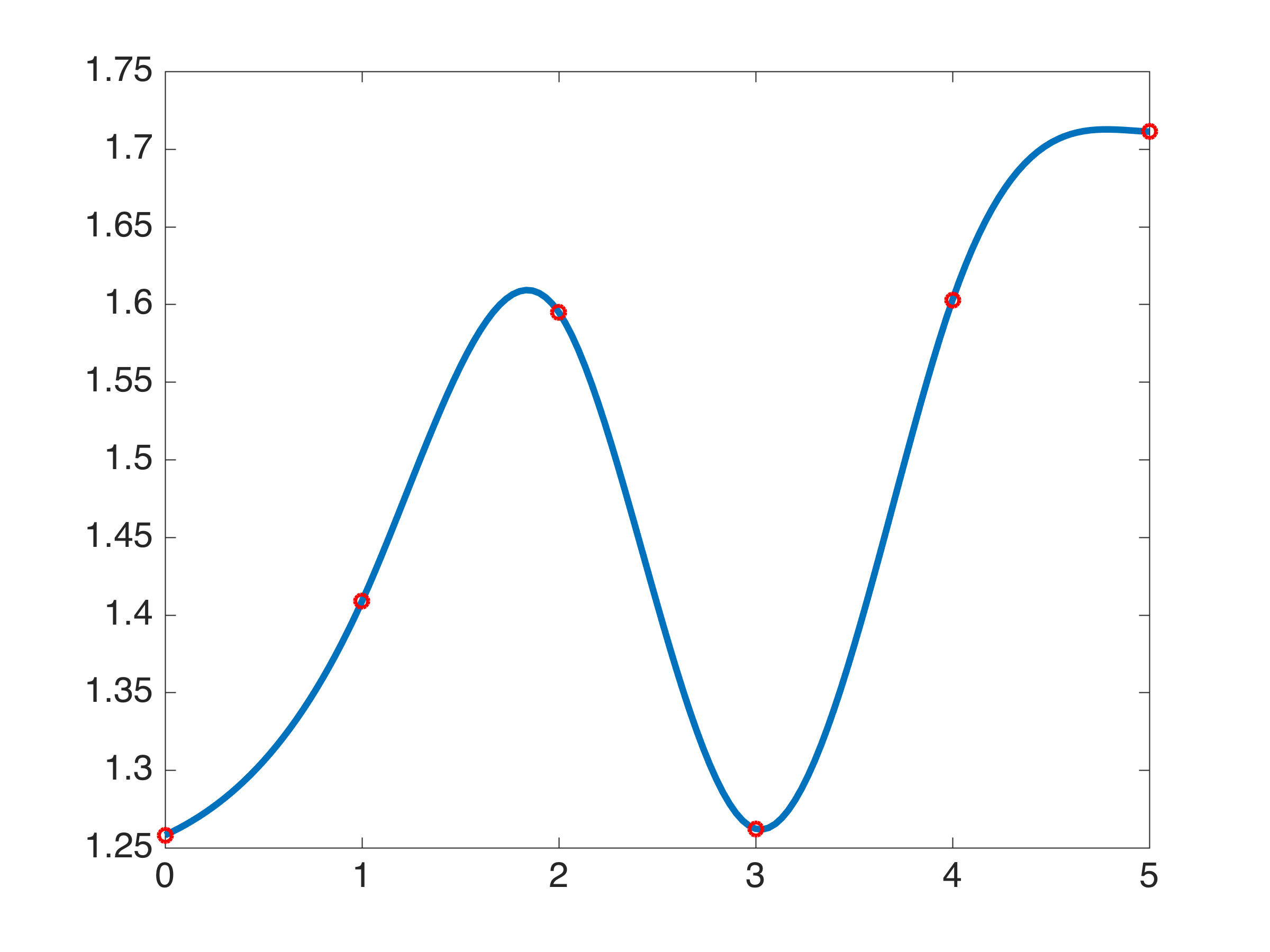}
\subcaption{$T=5$}
\label{fig:N5}
    \end{subfigure}%
    ~ 
    \begin{subfigure}[t]{0.26\textwidth}
\centering
\includegraphics[width=\textwidth]{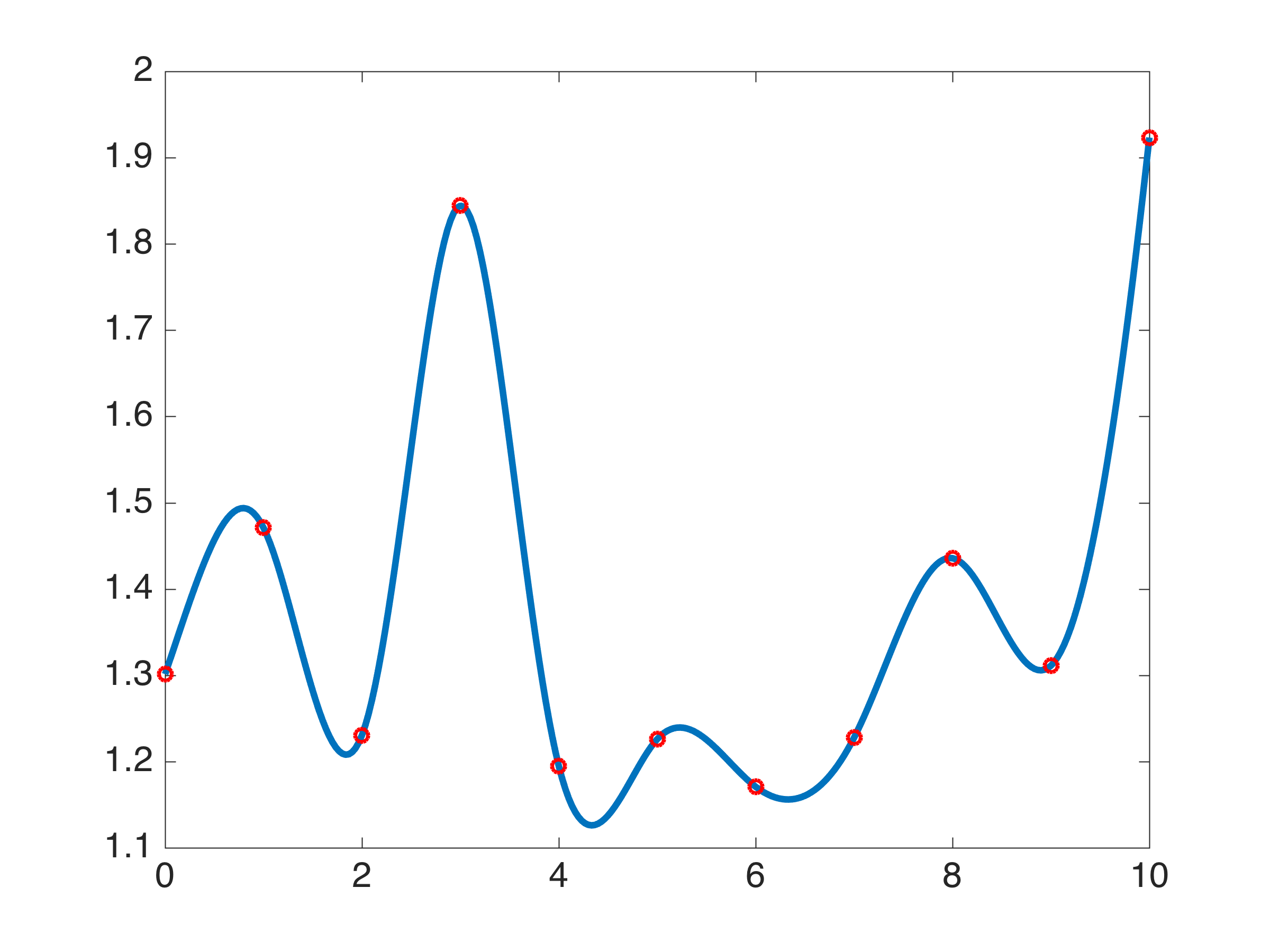}
\subcaption{$T=10$}
\label{fig:N10}
    \end{subfigure}
    \caption{Interpolation of covariances}
\label{fig:oneD}
\end{figure}

%\begin{figure}
%\end{figure}
%\begin{figure}
%\end{figure}

\section{Conclusions}\label{sec:conclusion}
A framework of tracking the states of indistinguishable particles with linear dynamics using output measures is presented. The measurements are the distributions of the output at several time points. Our framework relies on a recent development of optimal mass transport theory with prior dynamics \cite{chen2017optimal}. In the special case with Gaussian marginals, our problem has a SDP formulation and can be solved efficiently. Developing fast algorithms for the general cases will be a future research topic.

\begin{figure}
\centering
\includegraphics[width=0.5\textwidth]{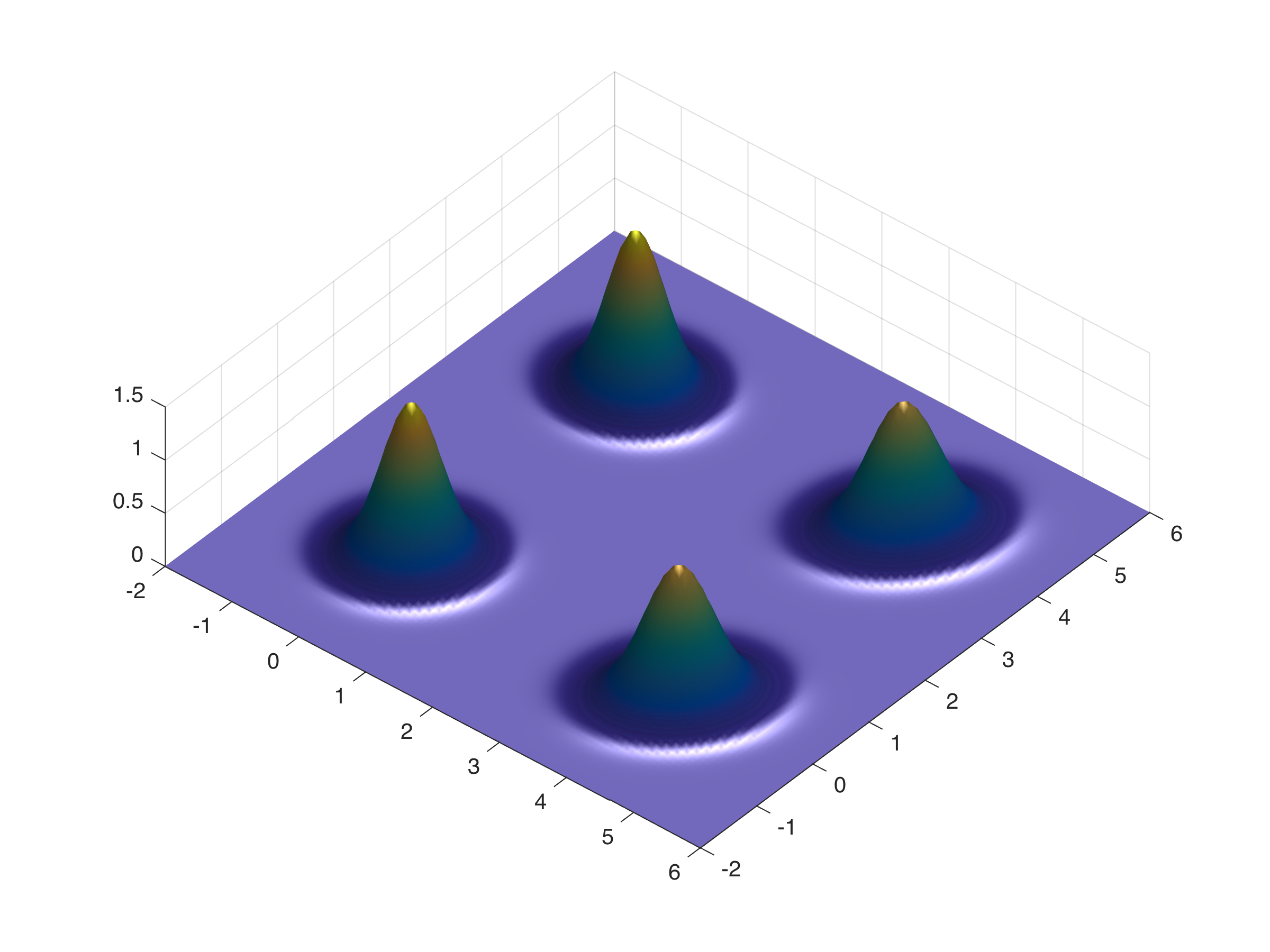}
\caption{Marginal distributions}
\label{fig:marginals}
\end{figure}
\begin{figure}
\centering
\includegraphics[width=0.5\textwidth]{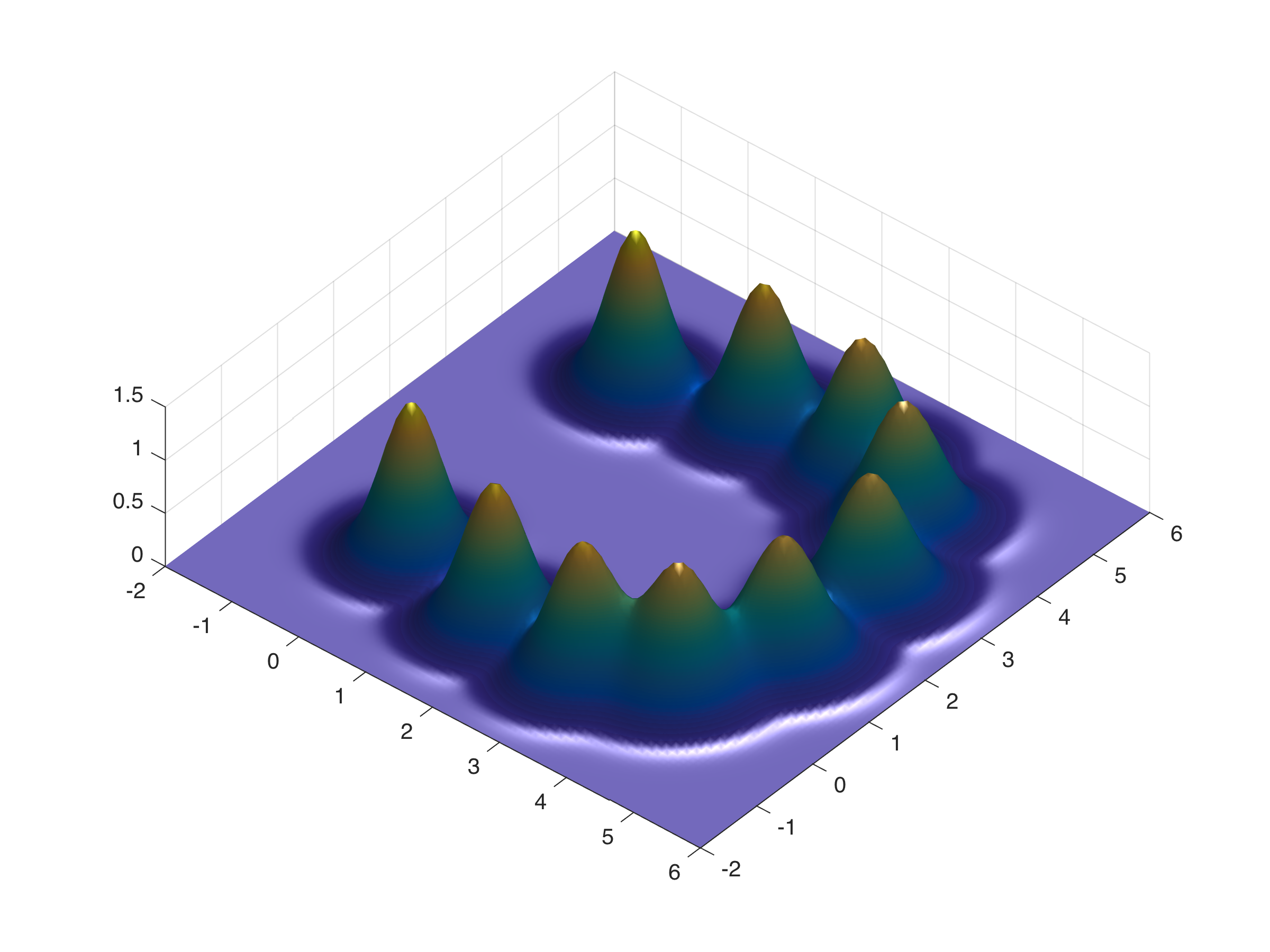}
\caption{Interpolation of covariances: $T=3$}
\label{fig:twoN4}
\end{figure}

\appendix

\subsection{Proof of Theorem~\ref{thm:duality}}

The Lagrangian of the problem \eqref{eq:dual}, where the constraint \eqref{eq:dualb} is relaxed, is given by
\begin{align*}
&L(\{\phi_k\}_{k=0}^T, M)=\sum_{k=0}^T \int_{C(X)}\phi_{k}(y_k)d\rho_{k}(y_k)\\&\quad +\int_{X^{T+1}}\left(\sum_{k=1}^T c(x_{k-1},x_k)-\sum_{k=0}^T \phi_{k}(Cx_k))\right)dM(\bx)
\end{align*}
where $\bx=\begin{pmatrix}x_0,\ldots, x_T\end{pmatrix}\in X^{T+1}$ \cite[Chapter~8]{luenberger1997optimization}. Note that since the constraint \eqref{eq:dualb} is defined by continuous functions, then the dual variable is a nonnegative measure $M\in \cM_+(X^{T+1})$.
Denote the one dimensional and two-dimensional marginals of $M$ by
\begin{subequations}\label{eq:marginals}
\begin{align}
d\hat \rho_k(x_k)&=\int_{x_0,\ldots, x_{k-1},x_{k+1}, \ldots, x_{T}\in X}dM(\bx),\label{eq:marginalsa}\\
d\pi_k(x_{k}, x_{k+1})&=\int_{x_0,\ldots, x_{k-1},x_{k+2}, \ldots, x_{T}\in X}dM(\bx).\label{eq:marginalsb}
\end{align}
\end{subequations}
For a fixed $M$, the maximum of $L(\{\phi_k\}_{k=0}^T, M)$ with respect to $\phi_k$ is finite only if 
\[
\int_{C(X)}\phi_{k}(y_k)d\rho_{k}(y_k)=\int_{X} \phi_{k}(Cx_k)) \hat \rho_k(x_k)
\] 
for any continuous functional $\phi_k: C(X)\mapsto \mR$. From the Rietz representation theorem (see e.g. \cite{luenberger1997optimization}) and the definition \eqref{eq:masspreserving} of mass preserving maps this is equivalent with $C_\# \rho_k=\hat \rho_k$.
Also, using \eqref{eq:marginalsb} the remaining second term of $L(\{\phi_k\}_{k=0}^T, M)$ can be written as 
\begin{equation}\label{eq:objective_function}
\sum_{k=0}^{T-1}\int_{(x_{k},x_{k+1})\in X\times X} c(x_{k},x_{k+1})d\pi_k(x_{k},x_{k+1}).
\end{equation}

Next, note that for any $M\in \cM_+(X^{T+1})$ the marginals \eqref{eq:marginals} satisfy the constraints \eqref{eq:optimizationb} and \eqref{eq:optimizationc} in the optimization problem \eqref{eq:optimization}. Conversely we would like to show that for any set of marginals satisfying the constraints \eqref{eq:optimizationb}-\eqref{eq:optimizationc} for $k=0,1,\ldots, T-1$ there is distribution $M\in\cM_+(X^{T+1})$ that satisfies \eqref{eq:marginals}.
Thus, let the one-dimensional marginals $\hat \rho(x_k)$ and two-dimensional marginals $\pi_k(x_{k}, x_{k+1})$ satisfy \eqref{eq:optimizationb} and \eqref{eq:optimizationc} for $k=0,1,\ldots, T-1$. 
From equations \eqref{eq:optimizationb}-\eqref{eq:optimizationc} it follows that the total mass $\hat \rho_k(X)=:\kappa$ is the same for all $k=0,1,\ldots, T$. Normalizing the total mass to $1$, we may consider the Markov chain with transition probabilities from time $k$ to $k+1$ specified by the joint distribution $\pi(x_{k}, x_{k+1})/\kappa$. Consequently if we let $M/\kappa$ be the measure corresponding to the joint probabilities of the Markov chain, then $M$ satisfies \eqref{eq:marginals} (see e.g., \cite{sharpe1988general}). %This shows that we obtain an equivalent formulation by replacing the joint distribution $M$ by the marginals  $\hat \rho(x_k)$ and $\pi_k(x_{k-1}, x_k)$ provided that the marginals satisfy the constraints \eqref{eq:optimizationb}-\eqref{eq:optimizationc} for $k=0,1,\ldots, T-1$.

To summarize, we have shown that the dual functional of \eqref{eq:dual} is given by
\[
\max_{\{\phi_k\}_{k=0}^T} L(\{\phi_k\}_{k=0}^T, M)=
\begin{cases}
\eqref{eq:objective_function} & \mbox{if } d\rho_k \mbox{ satisfies } \eqref{eq:optimizationd}\\
\infty & \mbox{otherwise,}
\end{cases}
\]
where $d\pi_k$ are the marginals given by \eqref{eq:marginalsb}. The functions $d\rho_k$ and $d\pi_k$ are marginals \eqref{eq:marginals} for some $M\in\cM_+(X^{T+1})$ if and only if 
the marginals satisfies the constraints \eqref{eq:optimizationb}-\eqref{eq:optimizationc}, hence the dual of \eqref{eq:dual} is given by \eqref{eq:optimization}. Furthermore, since \eqref{eq:dual} has a feasible interior point, the duality gap is zero \cite[p. 217]{luenberger1997optimization}. \hfill $\blacksquare$
%\end{proof}

\balance

\bibliographystyle{plain}

\bibliography{refs,bib_johan}

\begin{thebibliography}{10}

\bibitem{benamou2000computational}
J.-D. Benamou and Y.~Brenier.
\newblock A computational fluid mechanics solution to the {M}onge-{K}antorovich
  mass transfer problem.
\newblock {\em Numerische Mathematik}, 84(3):375--393, 2000.

\bibitem{benamou2015iterative}
J.-D. Benamou, G.~Carlier, M.~Cuturi, L.~Nenna, and G.~Peyr{\'e}.
\newblock Iterative {B}regman projections for regularized transportation
  problems.
\newblock {\em SIAM Journal on Scientific Computing}, 37(2):A1111--A1138, 2015.

\bibitem{BenGalVia18}
J.-D. Benamou, T.~Gallou{\"e}t, and F.-X. Vialard.
\newblock Second order models for optimal transport and cubic splines on the
  {W}asserstein space.
\newblock {\em arXiv preprint arXiv:1801.04144}, 2018.

\bibitem{CheConGeo18}
Y.~Chen, G.~Conforti, and T.T. Georgiou.
\newblock Measure-valued spline curves: an optimal transport viewpoint.
\newblock {\em arXiv preprint arXiv:1801.03186}, 2018.

\bibitem{chen2016relation}
Y.~Chen, T.T. Georgiou, and M.~Pavon.
\newblock On the relation between optimal transport and {S}chr{\"o}dinger
  bridges: {A} stochastic control viewpoint.
\newblock {\em Journal of Optimization Theory and Applications},
  169(2):671--691, 2016.

\bibitem{CheGeoPav14a}
Y.~Chen., T.T. Georgiou, and M.~Pavon.
\newblock Optimal steering of a linear stochastic system to a final probability
  distribution, {P}art {I}.
\newblock {\em IEEE Trans.\ on Automatic Control}, 61(5):1158--1169, 2016.

\bibitem{CheGeoPav17a}
Y.~Chen, T.T. Georgiou, and M.~Pavon.
\newblock Optimal steering of a linear stochastic system to a final probability
  distribution, {P}art {III}.
\newblock {\em arXiv:1608.03622, {IEEE Trans.\ on Automatic Control}, to
  appear}, 2017.

\bibitem{chen2017optimal}
Y.~Chen, T.T. Georgiou, and M.~Pavon.
\newblock Optimal transport over a linear dynamical system.
\newblock {\em IEEE Transactions on Automatic Control}, 62(5):2137--2152, 2017.

\bibitem{cvx}
Inc. CVX~Research.
\newblock {CVX}: Matlab software for disciplined convex programming, version
  2.0.
\newblock \url{http://cvxr.com/cvx}, August 2012.

\bibitem{elvander2017interpolation}
F.~Elvander, A.~Jakobsson, and J.~Karlsson.
\newblock Interpolation and extrapolation of {T}oeplitz matrices via optimal
  mass transport.
\newblock {\em arXiv preprint arXiv:1711.03890}, 2017.

\bibitem{engquist2014application}
B.~Engquist and B.D. Froese.
\newblock Application of the {W}asserstein metric to seismic signals.
\newblock {\em Communications in Mathematical Sciences}, 12(5), 2014.

\bibitem{Gal16}
A.~Galichon.
\newblock {\em Optimal Transport Methods in Economics}.
\newblock Princeton University Press, 2016.

\bibitem{grant2008graph}
M.C. Grant and S.P. Boyd.
\newblock Graph implementations for nonsmooth convex programs.
\newblock In {\em Recent advances in learning and control}, pages 95--110.
  Springer, 2008.

\bibitem{haker2004optimal}
S.~Haker, L.~Zhu, A.~Tannenbaum, and S.~Angenent.
\newblock Optimal mass transport for registration and warping.
\newblock {\em International Journal of computer vision}, 60(3):225--240, 2004.

\bibitem{Hol57}
J.C. Holladay.
\newblock A smoothest curve approximation.
\newblock {\em Mathematical tables and other aids to computation},
  11(60):233--243, 1957.

\bibitem{karlsson2016generalized}
J.~Karlsson and A.~Ringh.
\newblock Generalized {S}inkhorn iterations for regularizing inverse problems
  using optimal mass transport.
\newblock {\em SIAM Journal on Imaging Sciences}, 10(4):1935--1962, 2017.

\bibitem{luenberger1997optimization}
D.G. Luenberger.
\newblock {\em Optimization by vector space methods}.
\newblock John Wiley \& Sons, 1997.

\bibitem{pass2011uniqueness}
B.~Pass.
\newblock Uniqueness and {M}onge solutions in the multimarginal optimal
  transportation problem.
\newblock {\em SIAM Journal on Mathematical Analysis}, 43(6):2758--2775, 2011.

\bibitem{sharpe1988general}
M.~Sharpe.
\newblock {\em General theory of Markov processes}, volume 133.
\newblock Academic press, 1988.

\bibitem{villani2003topics}
C.~Villani.
\newblock {\em Topics in Optimal Transportation}, volume~58.
\newblock Graduate studies in Mathematics, AMS, 2003.

\bibitem{zeng2017sampled}
S.~Zeng, H.~Ishii, and F.~Allg{\"o}wer.
\newblock Sampled observability and state estimation of linear discrete
  ensembles.
\newblock {\em IEEE Transactions on Automatic Control}, 62(5):2406--2418, 2017.

\bibitem{zeng2016ensemble}
S.~Zeng, S.~Waldherr, C.~Ebenbauer, and F.~Allg{\"o}wer.
\newblock Ensemble observability of linear systems.
\newblock {\em IEEE Transactions on Automatic Control}, 61(6):1452--1465, 2016.

\end{thebibliography}

\end{document}